
\documentclass[lettersize, journal, 10pt]{IEEEtran}
%
% If IEEEtran.cls has not been installed into the LaTeX system files,
% manually specify the path to it like:
% \documentclass[journal]{../sty/IEEEtran}

\usepackage[T1]{fontenc}% optional T1 font encoding

\usepackage{booktabs}
\usepackage{cite}

% Some very useful LaTeX packages include:
% (uncomment the ones you want to load)

% *** MISC UTILITY PACKAGES ***
%
%\usepackage{ifpdf}
% Heiko Oberdiek's ifpdf.sty is very useful if you need conditional
% compilation based on whether the output is pdf or dvi.
% usage:
% \ifpdf
%   % pdf code
% \else
%   % dvi code
% \fi
% The latest version of ifpdf.sty can be obtained from:
% http://www.ctan.org/pkg/ifpdf
% Also, note that IEEEtran.cls V1.7 and later provides a builtin
% \ifCLASSINFOpdf conditional that works the same way.
% When switching from latex to pdflatex and vice-versa, the compiler may
% have to be run twice to clear warning/error messages.

% *** CITATION PACKAGES ***
%
%\usepackage{cite}
% cite.sty was written by Donald Arseneau
% V1.6 and later of IEEEtran pre-defines the format of the cite.sty package
% \cite{} output to follow that of the IEEE. Loading the cite package will
% result in citation numbers being automatically sorted and properly
% "compressed/ranged". e.g., [1], [9], [2], [7], [5], [6] without using
% cite.sty will become [1], [2], [5]--[7], [9] using cite.sty. cite.sty's
% \cite will automatically add leading space, if needed. Use cite.sty's
% noadjust option (cite.sty V3.8 and later) if you want to turn this off
% such as if a citation ever needs to be enclosed in parenthesis.
% cite.sty is already installed on most LaTeX systems. Be sure and use
% version 5.0 (2009-03-20) and later if using hyperref.sty.
% The latest version can be obtained at:
% http://www.ctan.org/pkg/cite
% The documentation is contained in the cite.sty file itself.

\usepackage[dvipsnames*,svgnames]{xcolor}

% *** GRAPHICS RELATED PACKAGES ***
%
\ifCLASSINFOpdf
  % \usepackage[pdftex]{graphicx}
  % declare the path(s) where your graphic files are
  % \graphicspath{{../pdf/}{../jpeg/}}
  % and their extensions so you won't have to specify these with
  % every instance of \includegraphics
  % \DeclareGraphicsExtensions{.pdf,.jpeg,.png}
\else
  % or other class option (dvipsone, dvipdf, if not using dvips). graphicx
  % will default to the driver specified in the system graphics.cfg if no
  % driver is specified.
  % \usepackage[dvips]{graphicx}
  % declare the path(s) where your graphic files are
  % \graphicspath{{../eps/}}
  % and their extensions so you won't have to specify these with
  % every instance of \includegraphics
  % \DeclareGraphicsExtensions{.eps}
\fi
% graphicx was written by David Carlisle and Sebastian Rahtz. It is
% required if you want graphics, photos, etc. graphicx.sty is already
% installed on most LaTeX systems. The latest version and documentation
% can be obtained at: 
% http://www.ctan.org/pkg/graphicx
% Another good source of documentation is "Using Imported Graphics in
% LaTeX2e" by Keith Reckdahl which can be found at:
% http://www.ctan.org/pkg/epslatex
%
% latex, and pdflatex in dvi mode, support graphics in encapsulated
% postscript (.eps) format. pdflatex in pdf mode supports graphics
% in .pdf, .jpeg, .png and .mps (metapost) formats. Users should ensure
% that all non-photo figures use a vector format (.eps, .pdf, .mps) and
% not a bitmapped formats (.jpeg, .png). The IEEE frowns on bitmapped formats
% which can result in "jaggedy"/blurry rendering of lines and letters as
% well as large increases in file sizes.
%
% You can find documentation about the pdfTeX application at:
% http://www.tug.org/applications/pdftex

% *** MATH PACKAGES ***
%
\usepackage{amsmath}
\usepackage{amsfonts}
\usepackage{amssymb}
\usepackage{amsthm}
% A popular package from the American Mathematical Society that provides
% many useful and powerful commands for dealing with mathematics.
%
% Note that the amsmath package sets \interdisplaylinepenalty to 10000
% thus preventing page breaks from occurring within multiline equations. Use:
\interdisplaylinepenalty=2500
% after loading amsmath to restore such page breaks as IEEEtran.cls normally
% does. amsmath.sty is already installed on most LaTeX systems. The latest
% version and documentation can be obtained at:
% http://www.ctan.org/pkg/amsmath

% *** SPECIALIZED LIST PACKAGES ***
%
%\usepackage{algorithmic}
% algorithmic.sty was written by Peter Williams and Rogerio Brito.
% This package provides an algorithmic environment fo describing algorithms.
% You can use the algorithmic environment in-text or within a figure
% environment to provide for a floating algorithm. Do NOT use the algorithm
% floating environment provided by algorithm.sty (by the same authors) or
% algorithm2e.sty (by Christophe Fiorio) as the IEEE does not use dedicated
% algorithm float types and packages that provide these will not provide
% correct IEEE style captions. The latest version and documentation of
% algorithmic.sty can be obtained at:
% http://www.ctan.org/pkg/algorithms
% Also of interest may be the (relatively newer and more customizable)
% algorithmicx.sty package by Szasz Janos:
% http://www.ctan.org/pkg/algorithmicx

\usepackage{bm}
\usepackage{graphicx}
\usepackage{subfigure}
\usepackage{stfloats}	% h:here, t:top, b:bottom, p:page
\usepackage[dvipsnames*,svgnames]{xcolor}
\usepackage{comment}
\usepackage{ulem}		% using for modifying the paper
\usepackage{tikz-cd}
\usepackage{stmaryrd}
\usepackage{multirow}
\usepackage{makecell}
\usepackage{gensymb}

\usepackage{algorithm}
\usepackage{algorithmicx}
\usepackage{algpseudocode}

\makeatletter

\newcommand{\Rmnum}[1]{\expandafter\@slowromancap\romannumeral #1@}
\makeatother

\newcommand{\RNum}[1]{\uppercase\expandafter{\romannumeral #1\relax}}

\newcommand\redsout{\bgroup\markoverwith{\textcolor{red}{\rule[0.5ex]{1pt}{1.5pt}}}\ULon}

\makeatletter
\newcommand{\removelatexerror}{\let\@latex@error\@gobble}
\makeatother

% \SetCommentSty{OwnCommFont}

\newcommand{\Input}{\textbf{Input: }} 
\newcommand{\Output}{\textbf{Output: }}
\newcommand{\Break}{\textbf{break}}
\newcommand{\IF}{\textbf{if }}
\newcommand{\THEN}{\textbf{then }}
\newcommand{\ELSEIF}{\textbf{else if }}
\newcommand{\ENDIF}{\textbf{end if}}

\makeatletter
\renewcommand{\ALG@beginalgorithmic}{\footnotesize}
\makeatother

% correct bad hyphenation here
\hyphenation{op-tical net-works semi-conduc-tor}
% \pagecolor[rgb]{0.9, 0.99, 0.9}

\theoremstyle{definition}
\newtheorem{proposition}{Proposition}
\theoremstyle{corollary}

\theoremstyle{remark}

\newcommand{\proposname}{Proposition }
\newcommand{\algorithmname}{Algorithm }

\AtBeginDocument{
 	\addtolength{\abovedisplayskip}{-1.0ex}
    \addtolength{\abovedisplayshortskip}{-1.0ex}
	\addtolength{\belowdisplayskip}{-1.0ex}
	\addtolength{\belowdisplayshortskip}{-1.0ex}
}

\begin{document}
%
% paper title
% Titles are generally capitalized except for words such as a, an, and, as,
% at, but, by, for, in, nor, of, on, or, the, to and up, which are usually
% not capitalized unless they are the first or last word of the title.
% Linebreaks \\ can be used within to get better formatting as desired.
% Do not put math or special symbols in the title.
\title{{Hybrid Beamforming Design for Millimeter Wave Multiuser MIMO Systems with Dynamic Subarrays}}
%
%
% author names and IEEE memberships
% note positions of commas and nonbreaking spaces ( ~ ) LaTeX will not break
% a structure at a ~ so this keeps an author's name from being broken across
% two lines.
% use \thanks{} to gain access to the first footnote area
% a separate \thanks must be used for each paragraph as LaTeX2e's \thanks
% was not built to handle multiple paragraphs
%

\author{
		Gengshan Wang, % \IEEEmembership{Membership, IEEE}, 
		Zhijia Yang, % \IEEEmembership{Membership, IEEE}, 
		and Tierui Gong %, \IEEEmembership{Membership, IEEE}
		\thanks{G. Wang and Z. Yang are with the State Key Laboratory of Robotics, Shenyang Institute of Automation, Chinese Academy of Sciences, Shenyang 110016, China, also with the Key Laboratory of Networked Control System, Shenyang Institute of Automation, Chinese Academy of Sciences, Shenyang 110016, China, and also with the Institutes for Robotics and Intelligent Manufacturing, Chinese Academy of Sciences, Shenyang 110169, China (email: wanggengshan@sia.cn, yang@sia.cn). G. Wang is also with the University of Chinese Academy of Sciences, Beijing 100049, China. T. Gong is with Singapore University of Technology and Design, Singapore (email: tierui\_gong@sutd.edu.sg). }
		\vspace{-0.5cm}
}

\maketitle

% As a general rule, do not put math, special symbols or citations
% in the abstract or keywords.
\begin{abstract}
In this letter, we investigate the millimeter wave (mmWave) downlink multiuser multiple-input multiple-output (MU-MIMO) system, adopting the dynamic subarray architecture at the base station and considering the multi-stream communication for each user. 
Aiming at maximizing the system spectral efficiency, we propose a novel hybrid beamforming design. First, assuming no inter-user interference (IUI), we easily get the optimal fully-digital beamformers and combiners using the singular value decomposition of each user channel and the waterfilling algorithm. Then, based on the obtained fully-digital beamformers, we propose a Kuhn-Munkres algorithm-assisted dynamic hybrid beamforming design, which guarantees that each radio-frequency chain is connected to at least one antenna. 
Finally, we propose to further project each obtained digital beamformer onto the null space of all the other equivalent user channels to cancel the IUI. Numerical results verify the superiority of our proposed hybrid beamforming design. 
\end{abstract}

% Note that keywords are not normally used for peerreview papers.
\begin{IEEEkeywords}
Millimeter wave, MU-MIMO, dynamic subarrays, hybrid beamforming.
\end{IEEEkeywords}

% For peer review papers, you can put extra information on the cover
% page as needed:
% \ifCLASSOPTIONpeerreview
% \begin{center} \bfseries EDICS Category: 3-BBND \end{center}
% \fi
%
% For peerreview papers, this IEEEtran command inserts a page break and
% creates the second title. It will be ignored for other modes.
\IEEEpeerreviewmaketitle

\vspace{-0.3cm}
\section{Introduction}
Millimeter wave (mmWave) has been deemed as a key enabling technology for future wireless communications, as they can provide abundant underutilized spectrum resources to achieve higher data rates \cite{Heath_Overview_mmWave}. The short wavelength at mmWave frequency enables a large number of antennas to be packed in a small dimension, which facilitates the utilization of massive multiple-input multiple-output (MIMO) technique in mmWave systems to realize large-scale spatial multiplexing and perform highly directional beamforming \cite{Heath_Overview_mmWave, MassiveMIMOBS_First, Tse_FoWC}. 

Given the high cost and power consumptions of mmWave radio-frequency (RF) chains, the conventional fully-digital beamforming that requires one dedicated RF chain for each antenna is infeasible for mmWave massive MIMO systems. To address the mmWave hardware constraints while achieving good performance, the hybrid beamforming, using a reduced number of RF chains, has been proposed \cite{Heath_Overview_mmWave}.  
According to the connections between RF chains and antennas, the hybrid beamforming structures can be categorized as the fully-connected structure (FCS) and partially-connected structure (PCS). In the FCS, each RF chain connects to all antennas via phase shifters, making this structure enjoy full beamforming gains \cite{ JunZhang_MIMO_OFDM_HB, HBF_MUMIMO_BUPT}. 
The FCS has higher hardware efficiency than fully-digital beamforming but requires large numbers of phase shifters.  
To further reduce the hardware complexity, the PCS has been proposed. Most existing works focus on the fixed subarrays (FSs) that connect each RF chain to a fixed disjoint subset of antennas \cite{Heath_Overview_mmWave}. 
The FSs greatly reduce the number of phase shifters, while some beamforming gains are sacrificed.

To improve the performance of the PCS, the dynamic subarrays (DSs) where each RF chain is connected to a nonempty disjoint subset of dynamically partitioned antennas have been studied \cite{Heath_SubA_HB, mmWaveWB_Dynamic_DUT, Dynamic_LowCom, Dyanmic_1Ant}. In \cite{Heath_SubA_HB}, the authors first proposed the DS architecture and presented the channel covariance matrix based hybrid beamforming. The authors in \cite{mmWaveWB_Dynamic_DUT} exploited the penalty dual decomposition (PDD) method to get the dynamic hybrid beamforming with low-resolution phase shifters. However, these two works cannot guarantee that each RF chain is connected to at least one antenna, which may result in the problem that the number of used RF chains is smaller than that of data streams. 
The work in \cite{Dynamic_LowCom} proposed a low complexity greedy hybrid beamforming for DSs. 
However, all the above works only consider point-to-point communications. In \cite{Dyanmic_1Ant}, the authors considered the multiuser (MU) scenario and proposed the subchannel difference based dynamic hybrid beamforming. However, this paper only considers the single-antenna users, and the proposed method is difficult to scale to the multi-stream communications for multi-antenna users.

The main contributions of this letter are as follows:
\begin{itemize} 
	\item We consider a more generalized mmWave downlink (DL) MU-MIMO system, where the base station (BS) with DSs simultaneously provides multi-stream communications for each user. The hybrid beamforming for the considered system is more complicated than the scenario with single-antenna users and has been less well studied. 
	
	\item We proposed a novel hybrid beamforming design for the considered system. We first assume no inter-user interference (IUI) and leverage the singular value decomposition (SVD) of user channels with the waterfilling strategy to get the optimal fully-digital beamformers and combiners. We then propose a Kuhn-Munkres (KM) algorithm aided hybrid beamforming design for DSs, which guarantees that each RF chain is connected to at least one antenna. To null out the IUI, we finally propose to project each obtained digital beamformer onto the null space of all the other equivalent user channels.
	
	\item We analyze the computational complexity of the proposed design. Numerical results demonstrate that our proposed design outperforms the conventional related methods.
\end{itemize}

\vspace{-0.4cm}
\section{System Model and Problem Formulation}
\subsection{System and Channel Model}
Consider a single-cell narrowband mmWave DL MU-MIMO system as illustrated in \figurename{\ref{Fig1}}, 
in which the BS with DSs simultaneously serves $K$ multi-antenna users. When adopting DSs, the analog beamformer is implemented by the dynamic connection network (DCN) and phase shifters, as shown in \figurename{\ref{Fig1}}. The DCN, realized by a switch network, provides the dynamic connections between the RF chains and antennas. The BS is equipped with $N_{\mathrm{T}}$ antennas and $N_{\mathrm{RF}}$ transmit RF chains. Each user is a fully-digital receiver with $N_{\mathrm{R}}$ antennas. The number of data streams for each user is $N_{s}$. Constrained by the hybrid and fully-digital beamforming structures, we have $K N_{s} \le N_{\mathrm{RF}} \le N_{\mathrm{T}}$ and $N_{s} \le N_{\mathrm{R}}$. 

\begin{figure}[!t]
	\centering
	\includegraphics[width=8.2cm, height=3.5cm]{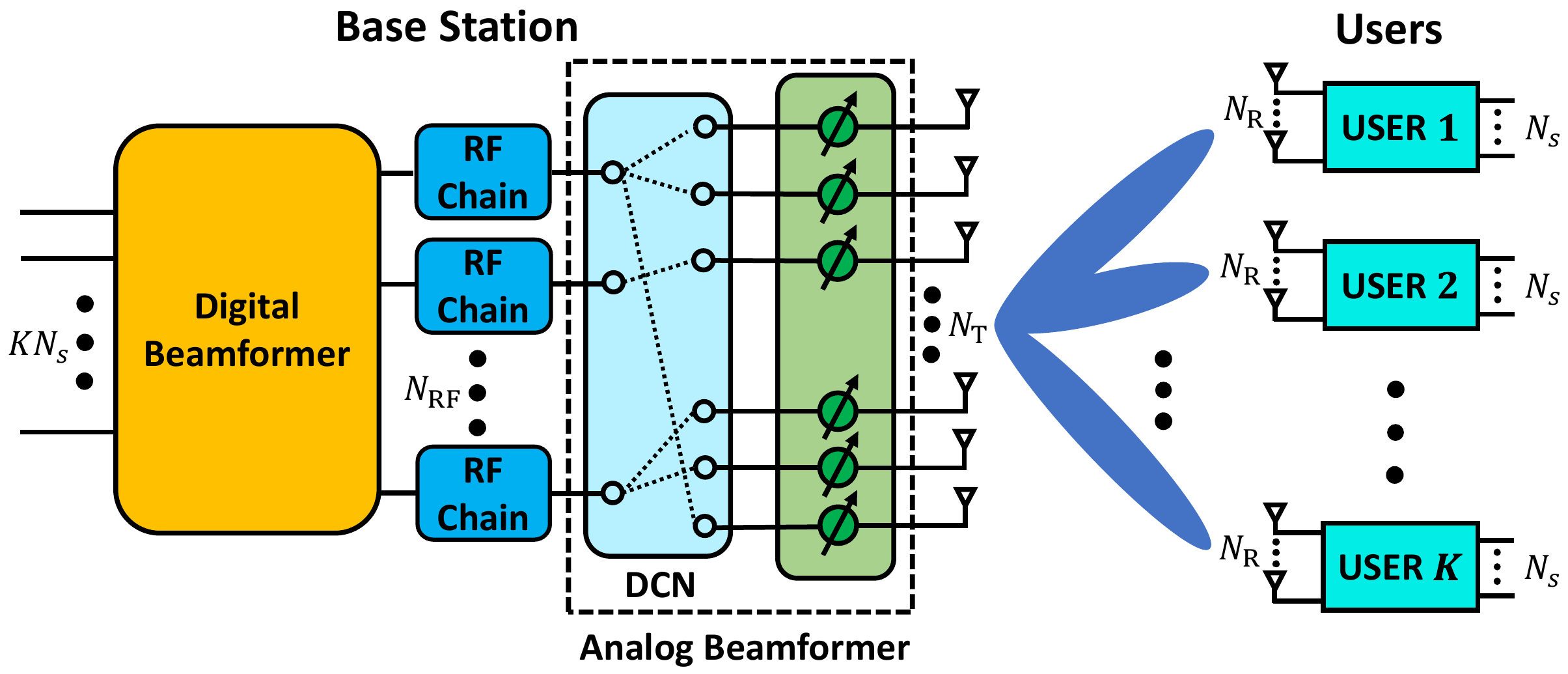}
	\vspace{-0.3cm}
	\caption{\label{Fig1} The mmWave MU-MIMO system with dynamic subarrays.}		% unified format: -0.3cm, -0.8cm
	\vspace{-0.55cm}
\end{figure}

At the BS, the symbol vector for $K$ users $\mathbf{s}  = [\mathbf{s}_{1}^{T}, \mathbf{s}_{2}^{T}, ..., \mathbf{s}_{K}^{T}]^{T} \in \mathbb{C}^{K N_{s} \times 1}$ satisfying $\mathbb{E} \{\mathbf{s} \mathbf{s}^{H}\} = \mathbf{I}_{K N_{s}}$, where $\mathbf{s}_{k}$ denotes the symbol vector for USER $k$, is first precoded by the digital beamformer $\mathbf{F}_{\mathrm{BB}} = [\mathbf{F}_{\mathrm{BB}1}, \mathbf{F}_{\mathrm{BB}2}, ..., \mathbf{F}_{\mathrm{BB}K}] \in \mathbb{C}^{N_{\mathrm{RF}} \times K N_{s}}$, in which $\mathbf{F}_{\mathrm{BB}k} \in \mathbb{C}^{N_{\mathrm{RF}} \times N_{s}}$ denotes the digital beamformer for the symbol vector $\mathbf{s}_{k}$, and then up-converted to the mmWave frequency through RF chains and processed by the analog beamformer $\mathbf{F}_{\mathrm{RF}} \in \mathbb{C}^{N_{\mathrm{T}} \times N_{\mathrm{RF}}}$. The final precoded signal $\mathbf{F}_{\mathrm{RF}} \mathbf{F}_{\mathrm{BB}} \mathbf{s}$ is  simultaneously transmitted from the BS to all users. 
At the receiver USER $k$, the received signal is first down-converted to the baseband and then processed by the fully-digital combiner $\mathbf{W}_{k} \in \mathbb{C}^{N_{\mathrm{R}} \times N_{s}}$. Thus, the final received baseband signal is
\begin{align*}
	\widetilde{\mathbf{y}}_{k} = & {} \mathbf{W}_{k}^{H} \mathbf{H}_{k} \mathbf{F}_{\mathrm{RF}} \mathbf{F}_{\mathrm{BB}} \mathbf{s} + \mathbf{W}_{k}^{H} \mathbf{n}_{k} \\
						   = & {} \underbrace{\mathbf{W}_{k}^{H} \mathbf{H}_{k} \mathbf{F}_{k} \mathbf{s}_{k}}_{\text{desired signal}} +  \underbrace{\mathbf{W}_{k}^{H} \mathbf{H}_{k} \sum_{i \neq k}^{K} \mathbf{F}_{i} \mathbf{s}_{i}}_{\text{inter-user interference}} + \underbrace{\mathbf{W}_{k}^{H} \mathbf{n}_{k}}_{\text{noise}},		\label{Eq1} \tag{1}
\end{align*}
where $\mathbf{H}_{k} \in \mathbb{C}^{N_{\mathrm{R}} \times N_{\mathrm{T}}}$ is the DL channel from the BS to USER $k$, $\mathbf{n}_{k} \sim \mathcal{CN}\left(\mathbf{0}_{N_{\mathrm{R}} \times 1}, \sigma^{2} \mathbf{I}_{N_{s}}\right)$ is the additive white Gaussian noise vector at USER $k$, and $\mathbf{F}_{k} = \mathbf{F}_{\mathrm{RF}} \mathbf{F}_{\mathrm{BB}k}$ denotes the overall hybrid beamformer for the symbol vector $\mathbf{s}_{k}$.   

We adopt the cluster channel model with consideration of the severe path loss \cite{mmWaveChannelModel_NYU} to characterize the narrowband mmWave channel from the BS to USER $k$, which is
\begin{align*}
	\mathbf{H}_{k} = \sqrt{\frac{N_{\mathrm{T}} N_{\mathrm{R}} \rho_{k}}{N_{c}^{k} N_{ray}^{k}}}	\sum_{c = 1}^{N_{c}^{k}} \sum_{r = 1}^{N_{ray}^{k}} \alpha_{cr}^{k} \mathbf{a}_{r}(\theta_{cr}^{k}) \mathbf{a}_{t} (\phi_{cr}^{k})^{H},		\label{Eq2} \tag{2}	
\end{align*}
where $N_{c}^{k}$ is the number of clusters in the channel from the BS to USER $k$, each of which contributes $N_{ray}^{k}$ rays, and $\rho_{k}$ is the path loss. $\alpha_{cr}^{k} \sim \mathcal{CN} (0, 1)$, $\theta_{cr}^{k}$, and $\phi_{cr}^{k}$ denote the complex gain, the angle of departure (AoD) and the angle of arrival (AoA) of the $r$-th ray in the $c$-th cluster for USER $k$, respectively. $\mathbf{a}_{r}(\cdot)$ and $\mathbf{a}_{t} (\cdot)$ are respectively the normalized receive and transmit antenna array response vectors.  
\vspace{-0.4cm}
\subsection{Problem Formulations}
In this paper, we assume that the data streams for each user follow the Gaussian distribution and the channel state information is perfectly known at the BS. Then the achievable spectral efficiency (SE) of USER $k$ is
\begin{align*}
	R_{k} = \log_{2} \vert \mathbf{I}_{N_{s}} + \mathbf{C}_{k}^{-1} \mathbf{W}_{k}^{H} \mathbf{H}_{k}  \mathbf{F}_{k} \mathbf{F}_{k}^{H} \mathbf{H}_{k}^{H} \mathbf{W}_{k} \vert,		\label{Eq3} \tag{3}
\end{align*}
where $\mathbf{C}_{k} = \mathbf{W}_{k}^{H} \mathbf{H}_{k} \left(\sum_{i \neq k}^{K} \mathbf{F}_{i} \mathbf{F}_{i}^{H}\right) \mathbf{H}_{k}^{H} \mathbf{W}_{k} + \sigma_{k}^{2} \mathbf{W}_{k}^{H} \mathbf{W}_{k}$ is the interference-plus-noise covariance matrix at USER $k$.

The problem of interest is to jointly design the dynamic hybrid beamformer at the BS and the fully-digital combiner at each user to maximize the overall SE of the mmWave MU-MIMO systems, which is formulated as
\begin{align*}
	\max_{\mathcal{A}} & {} \quad \frac{1}{K} \sum_{k = 1}^{K} R_{k},	\label{Eq4a} \tag{4a} \\
	\mathrm{s.t.} \;   & {}	\quad \sum_{k = 1}^{K} \Vert \mathbf{F}_{\mathrm{RF}} \mathbf{F}_{\mathrm{BB}k} \Vert_{\mathrm{F}}^{2} = P, \label{Eq4b} \tag{4b}	\\
					   & {} \quad \vert \mathbf{F}_{\mathrm{RF}}(i, j) \vert = 1, \quad \forall (i, j) \in \mathcal{F},					  \label{Eq4c} \tag{4c} \\
				  	   & {}	\quad \Vert \mathbf{F}_{\mathrm{RF}}(i, :) \Vert_{\mathrm{0}} = 1,    \quad i = 1, 2, ..., N_{\mathrm{T}},	  \label{Eq4d} \tag{4d}	\\
				  	   & {}	\quad \Vert \mathbf{F}_{\mathrm{RF}}(:, j) \Vert_{\mathrm{0}} \ge 1,  \; \; \; j = 1, 2, ..., N_{\mathrm{RF}},	  \label{Eq4e} \tag{4e}	
\end{align*}
where $\mathcal{A} = \{\mathbf{F}_{\mathrm{RF}}, \{\mathbf{F}_{\mathrm{BB}k}\}_{k = 1}^{K}, \{\mathbf{W}_{k}\}_{k = 1}^{K}\}$ is the set of optimization variables. \eqref{Eq4b} is the coupling transmit power constraint at the BS, where $P$ is the total transmit power. \eqref{Eq4c} denotes the unit-modulus constraint on each nonzero element of the analog beamformer $\mathbf{F}_{\mathrm{RF}}$, in which $\mathcal{F}$ is the set of nonzero elements of the analog beamformer. \eqref{Eq4d} and \eqref{Eq4e} are two constraints introduced by DSs. The former constraint guarantees that each antenna is only connected to one RF chain, and the latter constraint ensures that each RF chain connects to at least one antenna. Subject to the above constraints and the IUI, problem (4) is highly nonconvex and very difficult to solve.

\vspace{-0.5cm}
\section{Proposed Hybrid Beamforming Design}
In this section, we propose an effective three-stage hybrid beamforming design to solve the nonconvex problem (4). 
\vspace{-0.5cm}
\subsection{Fully-Digital Beamforming When Assuming No IUI}
When assuming no IUI, the transmissions from the BS to each user are independent and parallel. Thus, we get the optimal fully-digital beamformers and combiners using the SVD of the user channels and the waterfilling algorithm \cite{Tse_FoWC}. We denote by $\widetilde{\mathbf{F}}_{k}$ the fully-digital beamformer for the symbol vector $\mathbf{s}_{k}$. Let the SVD of the channel from the BS to USER $k$ be $\mathbf{H}_{k} = \mathbf{U}_{k} \mathbf{D}_{k} \mathbf{V}_{k}^{H}$, where $\mathbf{U}_{k} \in \mathbb{C}^{N_{\mathrm{R}} \times N_{\mathrm{R}}}$ and $\mathbf{V}_{k} \in \mathbb{C}^{N_{\mathrm{T}} \times N_{\mathrm{T}}}$ are unitary matrices, and $\mathbf{D}_{k} \in \mathbb{C}^{N_{\mathrm{R}} \times N_{\mathrm{T}}}$ is a rectangular diagonal matrix whose main diagonal elements are the singular values in decreasing order.
Then the optimal fully-digital beamformer and combiner are given by
\begin{align*}
	\widetilde{\mathbf{F}}_{k}^{*} = {} \mathbf{V}_{k}(:, (1:N_{s})) \mathbf{P}_{k}^{\frac{1}{2}}, \quad \mathbf{W}_{k}^{*} = {} \mathbf{U}_{k}(:, (1:N_{s})),	\label{Eq5} \tag{5}
	\vspace{-0.2cm}
\end{align*}
where $\mathbf{P}_{k}$ is a diagonal matrix whose diagonal element denotes the allocated transmit power for the corresponding data stream by the waterfilling strategy such that $\sum_{k = 1}^{K} \mathrm{Tr}\left(\mathbf{P}_{k}\right) = P$.

\vspace{-0.45cm}
\subsection{KM Algorithm Aided Hybrid Beamforming for DSs}
In mmWave MIMO systems, maximizing SE can be approximately addressed by minimizing the Frobenius norm of the gap between the optimal fully-digital and overall hybrid beamforming \cite{Heath_Overview_mmWave, JunZhang_MIMO_OFDM_HB}. Thus, assuming no IUI, the hybrid beamforming design at the BS can be further written as 
\begin{align*}
	\min_{\mathcal{B}} & {} \quad \sum_{k = 1}^{K} \Vert \widetilde{\mathbf{F}}_{k}^{*} - \mathbf{F}_{\mathrm{RF}} \mathbf{F}_{\mathrm{BB}k} \Vert_{\mathrm{F}}^{2},    \label{Eq6a} \tag{6a} \\
	\mathrm{s.t.}	   & {} \quad \Vert \mathbf{F}_{\mathrm{RF}} \mathbf{F}_{\mathrm{BB}k} \Vert_{\mathrm{F}}^{2} = P_{k},	\quad \forall k = 1, 2, ..., K,				\label{Eq6b} \tag{6b} \\
					   & {} \quad \eqref{Eq4c}-\eqref{Eq4e},
\end{align*}
where $\mathcal{B} = \{\mathbf{F}_{\mathrm{RF}}, \{\mathbf{F}_{\mathrm{BB}k}\}_{k = 1}^{K}\}$, and $P_{k} = \mathrm{Tr}\left(\mathbf{P}_{k}\right)$ such that $\sum_{k = 1}^{K} P_{k} = P$. When solving the above problem, it has been proved that we can first remove the coupling transmit power constraint \eqref{Eq6b} and finally normalize the digital beamformer to satisfy this constraint \cite{JunZhang_MIMO_OFDM_HB}. Thus we propose the following KM algorithm aided hybrid beamforming design for DSs.

\subsubsection{Analog beamformer design}  
As each antenna is only connected to one RF chain, there is only one nonzero element in each row of $\mathbf{F}_{\mathrm{RF}}$. Assuming the $i$-th antenna is connected to the $l_{i}$-th RF chain, the subproblem with respect to $\mathbf{F}_{\mathrm{RF}}$ is equivalently expressed as
\begin{align*}
	\min_{\mathcal{C}} & {} \quad \sum_{i = 1}^{N_{\mathrm{T}}} \sum_{k = 1}^{K} \Vert \widetilde{\mathbf{F}}_{k}^{*}(i,:) - \mathbf{F}_{\mathrm{RF}}(i, l_{i}) \mathbf{F}_{\mathrm{BB}k}(l_{i},:) \Vert_{\mathrm{F}}^{2},    \label{Eq7a} \tag{7a} \\
	\mathrm{s.t.}	   & {} \quad \eqref{Eq4c}-\eqref{Eq4e}.
\end{align*}
where $\mathcal{C} = \{\mathbf{F}_{\mathrm{RF}}, \{l_{i}\}_{i = 1}^{N_{\mathrm{T}}}\}$. Since there is only one nonzero element in each row of $\mathbf{F}_{\mathrm{RF}}$, we can separately update the nonzero elements of $\mathbf{F}_{\mathrm{RF}}$ row-by-row. When removing the constant terms, 
the subproblem involving the $i$-th antenna is
\begin{align*}
\hspace{-0.1cm}	\min_{\mathcal{D}_{i}} & {} \; \sum_{k = 1}^{K} \Vert \mathbf{F}_{\mathrm{BB}k}(l_{i},:) \Vert_{\mathrm{F}}^{2} - 2 \operatorname{Re}\{\mathrm{Tr}\left({A}_{i}^{l_{i}} \mathbf{F}_{\mathrm{RF}}(i, l_{i})^{H}\right)\},    \label{Eq8a} \tag{8a} \\
\hspace{-0.1cm}	\mathrm{s.t.}	   & {} \; \vert \mathbf{F}_{\mathrm{RF}}(i, l_{i})\vert = 1, \quad l_{i} \in \{1, 2, ..., N_{\mathrm{RF}}\},	\label{Eq8b} \tag{8b} \\
\hspace{-0.1cm}					   & {} \; \Vert \mathbf{F}_{\mathrm{RF}}(i, :) \Vert_{\mathrm{0}} = 1,  \label{Eq8c} \tag{8c}
\end{align*}
where $\mathcal{D}_{i} = \{\mathbf{F}_{\mathrm{RF}}(i, l_{i}), l_{i}\}$ and $A_{i}^{l_{i}} = \sum_{k = 1}^{K} \widetilde{\mathbf{F}}_{k}^{*}(i,:) \mathbf{F}_{\mathrm{BB}k}^{H}(l_{i},:)$. 
Thus, the optimal solutions to minimize \eqref{Eq8a} are
\begin{align*}
	  \hspace{-0.4cm} l_{i}^{*} \hspace{-0.05cm}  = \hspace{-0.05cm} \min_{l_{i}} \sum_{k = 1}^{K} \Vert \mathbf{F}_{\mathrm{BB}k}(l_{i},:) \Vert_{\mathrm{F}}^{2} - 2 \vert A_{i}^{l_{i}} \vert,  % \label{Eq11a} \tag{11a} \\
      \mathbf{F}_{\mathrm{RF}}^{*}(i, l_{i}^{*}) \hspace{-0.05cm} = \hspace{-0.05cm} \frac{A_{i}^{l_{i}^{*}}}{\vert A_{i}^{l_{i}^{*}} \vert}. \hspace{-0.35cm} \label{Eq9} \tag{9}	
\end{align*}

As constraint \eqref{Eq4e} is not considered in problem (8), the obtained solutions for each antenna using \eqref{Eq9} may not satisfy this constraint. In this case, we reallocate some antennas to the RF chains connected to no antennas. To reallocate antennas, two key questions must be addressed: 
\textit{how many antennas are required to be reallocated} and \textit{how to select and reallocate the required number of antennas}. For the first question, we give the following proposition.
\begin{proposition}
	\vspace{-0.1cm}
	\label{Prop1}
	Let $N_{\mathrm{RF}}^{0}$ denote the number of RF chains connected to no antennas, and then the optimal number of antennas to be reallocated is $N_{\mathrm{RF}}^{0}$.
\end{proposition}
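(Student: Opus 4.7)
The plan is to prove the claim by combining a feasibility lower bound with an optimality argument rooted in the per-antenna separability of problem \eqref{Eq7a}. I would organize it as two halves: a ``must reallocate at least $N_{\mathrm{RF}}^{0}$'' direction, and a ``reallocating more than $N_{\mathrm{RF}}^{0}$ is no better'' direction.

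For the lower bound, I would argue directly from constraint \eqref{Eq4e}. After applying the closed-form update \eqref{Eq9}, by definition exactly $N_{\mathrm{RF}}^{0}$ RF chains remain connected to zero antennas, and each of them must acquire at least one antenna for \eqref{Eq4e} to hold. Since the only source of such antennas is the set of antennas currently assigned to other RF chains via \eqref{Eq9}, at least $N_{\mathrm{RF}}^{0}$ antennas must be moved from their initial indices.

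For optimality, I would exploit the fact that, once the coupling constraint \eqref{Eq6b} is dropped and $\{\mathbf{F}_{\mathrm{BB}k}\}$ are held fixed, the objective \eqref{Eq7a} decouples into a sum of $N_{\mathrm{T}}$ independent per-antenna terms, each of which is minimized by the pair $(l_{i}^{*}, \mathbf{F}_{\mathrm{RF}}^{*}(i, l_{i}^{*}))$ given in \eqref{Eq9}. Consequently, every antenna kept at its initial assignment contributes the smallest possible value to the sum, whereas reassigning any antenna to a different RF chain can only (weakly) increase its contribution. Hence any scheme that reallocates more than $N_{\mathrm{RF}}^{0}$ antennas incurs a total cost no smaller than the cost of a scheme that reallocates exactly $N_{\mathrm{RF}}^{0}$ antennas, one into each empty chain. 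Since the latter is feasible and attains the minimum admissible number of moves, $N_{\mathrm{RF}}^{0}$ is optimal.

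The main obstacle, I expect, is being careful that the argument does not silently assume the digital beamformers are re-optimized after each reallocation. They are not: the proposition lives inside the analog-beamformer subproblem \eqref{Eq7a}, where $\{\mathbf{F}_{\mathrm{BB}k}\}$ are fixed, and it is precisely this that preserves the per-antenna separability and lets the minimality of \eqref{Eq9} translate directly into ``every additional move weakly worsens the objective.'' A secondary subtlety, the question of \emph{which} $N_{\mathrm{RF}}^{0}$ antennas to move and \emph{where} so as to minimize the induced cost, is a cost-minimizing bipartite assignment problem that is logically separate from the counting claim here and is exactly what motivates the subsequent KM-based step.
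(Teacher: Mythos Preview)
Your proposal is correct and follows essentially the same route as the paper's own proof: both rely on the per-antenna separability of \eqref{Eq7a} to conclude that \eqref{Eq9} minimizes each summand, so that any additional reallocation beyond $N_{\mathrm{RF}}^{0}$ can only (weakly) increase the objective, while constraint \eqref{Eq4e} forces at least $N_{\mathrm{RF}}^{0}$ moves. Your explicit remark that $\{\mathbf{F}_{\mathrm{BB}k}\}$ are held fixed during this step, and your observation that the choice of \emph{which} antennas to move is a separate assignment problem deferred to the KM step, are both accurate and make the argument slightly cleaner than the paper's version, but the underlying idea is identical.
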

\begin{proof}
	Let $f_{i}(\mathcal{D}_{i}^{l_{i}})\hspace{-0.1cm} = \hspace{-0.1cm} \sum_{k = 1}^{K} \hspace{-0.05cm} \Vert \widetilde{\mathbf{F}}_{k}^{*}(i,:) \hspace{-0.05cm} - \hspace{-0.05cm} \mathbf{F}_{\mathrm{RF}}(i, l_{i}) \mathbf{F}_{\mathrm{BB}k}(l_{i},:) \Vert_{\mathrm{F}}^{2}$, where $\mathcal{D}_{i}^{l_{i}} = \{\mathbf{F}_{\mathrm{RF}}^{*}(i, l_{i}), l_{i}\}$, $i = 1, 2, ..., N_{\mathrm{T}}$, denotes the $i$-th antenna is connected to the $l_{i}$-th RF chain with the corresponding optimal analog beamformer element $\mathbf{F}_{\mathrm{RF}}^{*}(i, l_{i}) = \frac{A_{i}^{l_{i}}}{\vert A_{i}^{l_{i}} \vert}$. Thus for the $i$-th antenna, the obtained solution $\mathcal{D}_{i}^{l_{i}^{*}}$ using \eqref{Eq9} can minimize $f_{i}(\mathcal{D}_{i}^{l_{i}})$. Assuming that the $j$-th antenna is reallocated from the $l_{j}^{*}$-th RF chain to the $l_{j}^{\prime}$-th RF chain, we get $f_{j}(\mathcal{D}_{j}^{l_{j}^{\prime}}) \ge f_{j}(\mathcal{D}_{j}^{l_{j}^{*}})$ such that $\sum_{i \ne j}^{N_{\mathrm{T}}} f_{i}(\mathcal{D}_{i}^{l_{i}^{*}}) +  f_{j}(\mathcal{D}_{j}^{l_{i}^{\prime}}) \ge \sum_{i \ne j}^{N_{\mathrm{T}}} f_{i}(\mathcal{D}_{i}^{l_{i}^{*}}) +  f_{j}(\mathcal{D}_{j}^{l_{i}^{*}})$. Thus, we can infer that the more antennas are reallocated, the larger the objective function \eqref{Eq6a} is. To satisfy the constraint \eqref{Eq4e}, the number of reallocated antennas can not be fewer than $N_{\mathrm{RF}}^{0}$. Therefore, the optimal number of antennas to be reallocated is $N_{\mathrm{RF}}^{0}$.
	\vspace{-0.1cm}
\end{proof}
Based on \proposname{\ref{Prop1}}, the second question becomes \textit{how to select $N_{\mathrm{RF}}^{0}$ antennas and reallocate them to $N_{\mathrm{RF}}^{0}$ RF chains}. We denote by $\mathcal{S}_{\mathrm{RA}}$ the set of reallocated antennas thus $\vert \mathcal{S}_{\mathrm{RA}} \vert = N_{\mathrm{RF}}^{0}$. Let $\mathcal{S}_{\mathrm{RF}0}$ be the set of RF chains connected to no antennas. For $i \in \mathcal{S}_{\mathrm{RA}}$, we assume that the $i$-th antenna is only connected to the $l_{i}$-th RF chain, where $l_{i} \in \mathcal{S}_{\mathrm{RF}0}$. Then the objective function \eqref{Eq6a} after the antenna reallocation is 
\begin{align*}
   & {} \sum_{i \notin \mathcal{S}_{\mathrm{RA}}} f_{i}(\mathcal{D}_{i}^{l_{i}^{*}}) + \sum_{i \in \mathcal{S}_{\mathrm{RA}}} f_{i}(\mathcal{D}_{i}^{l_{i}}) \\
 = & {} \sum_{i = 1}^{N_{\mathrm{T}}} f_{i}(\mathcal{D}_{i}^{l_{i}^{*}}) + \sum_{i \in \mathcal{S}_{\mathrm{RA}}} (f_{i}(\mathcal{D}_{i}^{l_{i}}) - f_{i}(\mathcal{D}_{i}^{l_{i}^{*}})). 		\label{Eq10} \tag{10}
\end{align*}
When not considering constraint \eqref{Eq4e}, the obtained solutions $\{\mathcal{D}_{i}^{l_{i}^{*}}\}_{i = 1}^{N_{\mathrm{T}}}$ achieve the minimum of \eqref{Eq6a}, which is the first term in \eqref{Eq10}. Hence, to minimize the objective function \eqref{Eq6a} after the antenna reallocation, we should minimize the second term in \eqref{Eq10}, which  denotes the total increase in \eqref{Eq6a} caused by the antenna reallocation. To address the problem, we first construct the \textit{reallocation cost matrix} $\mathbf{G} \in \mathbb{C}^{M \times N_{\mathrm{RF}}^{0}}$, where $M$ is the number of antennas that can be reallocated. Let $\mathcal{S}_{l}, l = 1, 2, ..., N_{\mathrm{RF}}$, denote the set of antennas connected to the $l$-th RF chain. For the $l$-th RF chain, if it only connects to one antenna, this antenna will not be reallocated to prevent the $l$-th RF chain from being probably connected to no antennas after the antenna reallocation. Thus the set of antennas that can be reallocated is defined as $\mathcal{S}_{\mathrm{CRA}} = \cup_{l} \mathcal{S}_{l},  \text{ where }  \vert \mathcal{S}_{l} \vert > 1$,
and thus $M = \vert \mathcal{S}_{\mathrm{CRA}} \vert$.  $\forall a_{i} \in \mathcal{S}_{\mathrm{CRA}}, l_{j} \in \mathcal{S}_{\mathrm{RF}0}$, where $i$ and $j$ are respectively their positions in $\mathcal{S}_{\mathrm{CRA}}$ and $\mathcal{S}_{\mathrm{RF}0}$, we define the corresponding reallocation cost matrix element $\mathbf{G}(i,j)$ as
\begin{align*}
	\mathbf{G}(i,j) = f_{a_{i}}(\mathcal{D}_{a_{i}}^{l_{j}}) - f_{a_{i}}(\mathcal{D}_{a_{i}}^{l_{a_{i}}^{*}}),	\label{Eq11} \tag{11}
\end{align*}
which denotes the increase in the objective function \eqref{Eq6a} after reallocating the $a_{i}$-th antenna from the optimal $l_{a_{i}}^{*}$-th RF chain to the $l_{j}$-th RF chain. From the proof in \proposname{\ref{Prop1}}, we get $\mathbf{G}(i,j) \ge 0$. Thus the antenna reallocation to minimize the second term in \eqref{Eq10} can be further expressed as

\textit{Given the reallocation cost matrix $\mathbf{G}$, where $\mathbf{G}(i, j) \ge 0$, $\forall i, j$, select one element in each column so that these selected elements are in different rows and their sum is minimum.}

This is a classical unbalanced assignment problem, which can be efficiently solved by the extended KM algorithm \cite{ExMunkres_Assignment}. If $\mathbf{G}(m,n)$ is selected, we reallocate the $a_{m}$-th antenna from the $l_{a_{m}}^{*}$-th RF chain to the $l_{n}$-th RF chain. There is a special case where all the antennas connected to one RF chain are reallocated. If this happens, we keep the antenna that induces the maximum reallocation cost connected to this RF chain, remove the corresponding row in $\mathbf{G}$, and then reallocate the rest antennas again. Hence the proposed analog beamforming can guarantee that each RF chain is connected to at least one antenna. The details of our proposed KM aided dynamic analog beamforming is summarized in \algorithmname{\ref{Algo1}}.

\begin{figure}[!t]
	\removelatexerror
	
	\vspace{-9pt}
	\begin{algorithm}[H]
		
		\caption{KM Aided Dynamic Analog Beamformer Design\label{Algo1}}
		\begin{algorithmic}[1]
			\State \Input 	$\{\widetilde{\mathbf{F}}_{k}^{*}\}_{k = 1}^{K}$, $\{\mathbf{F}_{\mathrm{BB}k}\}_{k = 1}^{K}$, $N_{\mathrm{T}}$, $N_{\mathrm{RF}}$
			\State \Output	$\mathbf{F}_{\mathrm{RF}}$
			\State Initialize $\mathcal{S}_{1} = \mathcal{S}_{2} = ... = \mathcal{S}_{N_{\mathrm{RF}}} = \varnothing$, $\mathcal{S}_{\mathrm{RF}0} = \varnothing$, $\mathcal{S}_{\mathrm{CRA}} = \varnothing$, $\mathbf{F}_{\mathrm{RF}}^{*} = \mathbf{0}_{N_{\mathrm{T}} \times N_{\mathrm{RF}}}$, $i = 1$, $j = 1$, $N_{\mathrm{RF}}^{0} = 0$, $M = 0$.
			\Repeat
			\State Get $l_{i}^{*}$ and $\mathbf{F}_{\mathrm{RF}}^{*}(i, l_{i}^{*})$ according to \eqref{Eq9}, 
			$\mathcal{S}_{l_{i}^{*}} = \mathcal{S}_{l_{i}^{*}} \cup {i}$,  $i = i + 1$;
			\Until{($i > N_{\mathrm{T}}$)}
			
			\Repeat
			\State \IF{$\vert \mathcal{S}_{j} \vert == 0$} \THEN
			$\mathcal{S}_{\mathrm{RF}0} = \mathcal{S}_{\mathrm{RF}0} \cup j$, $N_{\mathrm{RF}}^{0} = N_{\mathrm{RF}}^{0} + 1$;
			\State \ELSEIF{$\vert \mathcal{S}_{j} \vert > 1$} \THEN
			$\mathcal{S}_{\mathrm{CRA}} = \mathcal{S}_{\mathrm{CRA}} \cup \mathcal{S}_{j}$, $M = M + 1$;
			\ENDIF
			\State $j = j + 1$;
			\Until{($j > N_{\mathrm{RF}}$)}
			\If{$N_{\mathrm{RF}}^{0} > 0$}
			\State Construct the \textit{reallocation cost matrix} $\mathbf{G}$ according to \eqref{Eq11}; 
			
			\Repeat
			\State Select one element in each column of $\mathbf{G}$ using the extended KM algorithm. Define $\mathcal{Q} = \{(x_{i}, y_{i}) | 1 \le x_{i} \le M, 1 \le y_{i} \le  N_{\mathrm{RF}}^{0}, a_{x_{i}} \in \mathcal{S}_{\mathrm{CRA}}, l_{y_{i}} \in \mathcal{S}_{\mathrm{RF}0}, i = 1, 2, ...,  N_{\mathrm{RF}}^{0} \}$ as the set of positions of the selected elements in $\mathbf{G}$, $q = 1$, $flag = 0$, $\mathbf{F}_{\mathrm{RF}} = \mathbf{F}_{\mathrm{RF}}^{*}$, $\mathbf{m} \in \mathbf{0}_{N_{\mathrm{RF}}^{0} \times 1}$, and $\mathbf{z} \in \mathbf{0}_{N_{\mathrm{RF}}^{0} \times 1}$; 
			
			\State Let $\mathcal{S}_{n}^{\prime} = \mathcal{S}_{n}$, $\forall n = 1, 2, ...,  N_{\mathrm{RF}}^{0}$;
			
			\Repeat 
			\State Get $a_{x_{q}}$ from $\mathcal{S}_{\mathrm{CRA}}$ and $l_{y_{q}}$ from $\mathcal{S}_{\mathrm{RF}0}$, and let $n = 1$;
			\Repeat
			\If{$a_{x_{q}} \in \mathcal{S}_{n}^{\prime}$} 
			\State	$\mathcal{S}_{n}^{\prime} = \mathcal{S}_{n}^{\prime} \setminus a_{x_{q}}$,
			$\mathcal{S}_{l_{y_{q}}}^{\prime} = \mathcal{S}_{l_{y_{q}}}^{\prime} \cup a_{x_{q}}$;
			\State 	$\mathbf{F}_{\mathrm{RF}}(a_{x_{q}}, l_{a_{x_{q}}}^{*}) = 0$,
			$\mathbf{F}_{\mathrm{RF}}(a_{x_{q}}, l_{y_{q}}) = \frac{A_{a_{x_{q}}}^{l_{y_{q}}}}{\vert A_{a_{x_{q}}}^{l_{y_{q}}} \vert}$;
			\If{$\mathbf{m}(n) < \mathbf{G}(x_{q}, y_{q})$}
			\State $\mathbf{m}(n) = \mathbf{G}(x_{q}, y_{q})$, $\mathbf{z}(n) = x_{q}$;
			\EndIf
			\State \IF{$\vert \mathcal{S}_{n}^{\prime} \vert$ == 0} \THEN
			$flag = 1$;
			\ENDIF
			\State \Break;
			\Else
			\State $n = n + 1$;
			\EndIf	
			\Until{($n > N_{\mathrm{RF}}^{0}$)}
			\If{$flag == 1$}
			\State $\mathbf{G} = [\mathbf{G}(1:(\mathbf{z}(n)-1), :)^{T}, \mathbf{G}((\mathbf{z}(n)+1):M, :)^{T}]^{T}$;	
			\State $\mathcal{S}_{\mathrm{CRA}} = \mathcal{S}_{\mathrm{CRA}} \setminus a_{\mathbf{z}(n)}$, $M = \vert \mathcal{S}_{\mathrm{CRA}} \vert$;
			\State \Break;
			\EndIf
			\State $q = q + 1$;
			\Until{($flag == 0$ and $q > N_{\mathrm{RF}}^{0}$)}
			\Until{($flag == 0$)}
			\Else
			\State $\mathbf{F}_{\mathrm{RF}} = \mathbf{F}_{\mathrm{RF}}^{*}$;
			\EndIf
			\State \Return $\mathbf{F}_{\mathrm{RF}}$.
		\end{algorithmic}
	\end{algorithm}
	\vspace{-1cm}
\end{figure}

\subsubsection{Digital beamformer design}
When the analog beamformer is fixed and the constant terms are removed, the subproblem with respect to the digital beamformer $\mathbf{F}_{\mathrm{BB}k}$ is
\begin{align*}
\hspace{-0.25cm} \min_{\mathbf{F}_{\mathrm{BB}k}} \hspace{-0.05cm} \mathrm{Tr} (\mathbf{F}_{\mathrm{BB}k}^{H} \mathbf{F}_{\mathrm{RF}}^{H} \mathbf{F}_{\mathrm{RF}} \mathbf{F}_{\mathrm{BB}k}) \hspace{-0.05cm} - \hspace{-0.05cm} 2 \operatorname{Re}\{\mathrm{Tr}(\mathbf{F}_{\mathrm{BB}k}^{H} \mathbf{F}_{\mathrm{RF}}^{H} \widetilde{\mathbf{F}}_{k}^{*})\}. \hspace{-0.2cm}	\label{Eq12} \tag{12}
\end{align*}
Checking the first-order optimality condition for $\mathbf{F}_{\mathrm{BB}k}$ yields
\begin{align*}
	\mathbf{F}_{\mathrm{BB}k}^{*} = (\mathbf{F}_{\mathrm{RF}}^{H} \mathbf{F}_{\mathrm{RF}})^{-1} \mathbf{F}_{\mathrm{RF}}^{H} \widetilde{\mathbf{F}}_{k}^{*}. \label{Eq13} \tag{13}
\end{align*}

Update the analog and digital beamformers alternatively until some termination condition triggers. Then we normalize $\mathbf{F}_{\mathrm{BB}k}$ by a factor of $\frac{\sqrt{P_{k}}}{\Vert \mathbf{F}_{\mathrm{RF}} \mathbf{F}_{\mathrm{BB}k} \Vert_{\mathrm{F}}}$ to satisfy the constraint \eqref{Eq6b}.

\vspace{-0.4cm}
\subsection{Null Space Projection to Cancel the IUI}
As the assumption that there is no IUI usually cannot be satisfied, we resort to the null space projection (NSP) method to null out the IUI. For USER $k$, 
we define $\widetilde{\mathbf{H}}_{\mathrm{eq}k}$ as
\begin{align*}
	\widetilde{\mathbf{H}}_{\mathrm{eq}k} = [\mathbf{H}_{\mathrm{eq}1}^{T}, ..., \mathbf{H}_{\mathrm{eq}(k-1)}^{T}, \mathbf{H}_{\mathrm{eq}(k+1)}^{T}, ..., \mathbf{H}_{\mathrm{eq}K}^{T}]^{T},		\label{Eq14} \tag{14}
\end{align*}
where $\mathbf{H}_{\mathrm{eq}i} = \mathbf{W}_{i}^{H} \mathbf{H}_{i} \mathbf{F}_{\mathrm{RF}}$ is the equivalent channel for USER $i$. To cancel the IUI, we project the obtained digital beamformer $\mathbf{F}_{\mathrm{BB}k}$ onto the null space of $\widetilde{\mathbf{H}}_{\mathrm{eq}k}$. Let $\mathbf{C}_{k} = \mathcal{N}(\widetilde{\mathbf{H}}_{\mathrm{eq}k})$ be the orthogonal basis for the null space of $\widetilde{\mathbf{H}}_{\mathrm{eq}k}$, and then its projection matrix is $\mathbf{Q}_{k} = \mathbf{C}_{k} (\mathbf{C}_{k}^{H} \mathbf{C}_{k})^{-1} \mathbf{C}_{k}^{H}$.
Thus, the obtained digital beamformer after NSP is
\begin{align*}
	\mathbf{F}_{\mathrm{BB}k} = \mathbf{Q}_{k} \mathbf{F}_{\mathrm{BB}k},			\label{Eq15} \tag{15}
\end{align*}
which is further normalized by a factor of $\frac{\sqrt{P_{k}}}{\Vert \mathbf{F}_{\mathrm{RF}} \mathbf{F}_{\mathrm{BB}k} \Vert_{\mathrm{F}}}$. We summarize our proposed three-stage design in \algorithmname{\ref{Algo2}}.

We simply analyze the complexity of \algorithmname{\ref{Algo2}}. In the first stage, the complexity is dominated by the SVD of the user channels, which is $\mathcal{O}(K(N_{\mathrm{T}} N_{\mathrm{R}}^{2} + N_{\mathrm{T}}^{2} N_{\mathrm{R}}))$. 
In the second stage, the complexity of updating the analog and digital beamformers is respectively $\mathcal{O}(K N_{\mathrm{T}} N_{\mathrm{RF}} N_{s} + I_{1} I_{2} M (N_{\mathrm{RF}}^{0})^{2})$ and $\mathcal{O}(K I_{1} N_{\mathrm{T}} N_{\mathrm{RF}}^{2})$, where $I_{1}$ and $I_{2}$ denote the number of iterations required by the second stage and  \algorithmname{\ref{Algo1}}. 
In the third stage, the complexity of NSP to cancel the IUI is $\mathcal{O}(K (N_{\mathrm{T}} N_{\mathrm{RF}} N_{s} + N_{\mathrm{RF}}^{3}))$. 
Thus, the overall complexity is $\mathcal{O}(K(N_{\mathrm{T}} N_{\mathrm{R}}^{2} + N_{\mathrm{T}}^{2} N_{\mathrm{R}} + I_{1} N_{\mathrm{T}} N_{\mathrm{RF}}^{2}) + I_{1} I_{2} M (N_{\mathrm{RF}}^{0})^{2})$. 

\normalem
\begin{figure}[t]
	\removelatexerror
	
	\vspace{-9pt}
	\begin{algorithm}[H]
		
		\caption{Proposed Hybrid Beamforming Design for mmWave MU-MIMO Systems with Dynamic Subarrays \label{Algo2}}
		
		\begin{algorithmic}[1]
			\State \Input 	$\{\mathbf{H}_{k}\}_{k = 1}^{K}$, $N_{\mathrm{T}}$, $N_{\mathrm{R}}$, $N_{\mathrm{RF}}$, $N_{\mathrm{s}}$, $P$ 
			\State \Output  $\{\mathbf{W}_{k}\}_{k = 1}^{K}$, $\mathbf{F}_{\mathrm{RF}}$, $\{\mathbf{F}_{\mathrm{BB}k}\}_{k = 1}^{K}$
			\State Define the tolerance of accuracy $\varepsilon$ and the maximum number of iterations $L$. Initialize $k = 1$. 
			\Repeat 	\Comment{the first stage}
				 \State Get $\widetilde{\mathbf{F}}_{k}^{*}$ and $\mathbf{W}_{k}$ according to \eqref{Eq5}, $P_{k} = \mathrm{Tr}\left(\mathbf{P}_{k}\right)$, $k = k + 1$;
			\Until{($k > K$)}
			\State Initialize $\mathbf{F}_{\mathrm{RF}}$ randomly satisfying constraints \eqref{Eq4c}-\eqref{Eq4e}, $\mathbf{F}_{\mathrm{BB}k} = (\mathbf{F}_{\mathrm{RF}}^{H} \mathbf{F}_{\mathrm{RF}})^{-1} \mathbf{F}_{\mathrm{RF}}^{H} \widetilde{\mathbf{F}}_{k}^{*}$, $k = 1, 2, ..., K$, and $l = 1$;
			\Repeat		\Comment{the second stage}
				\State Update $\mathbf{F}_{\mathrm{RF}}$ using \algorithmname{\ref{Algo1}};
				\State $\Delta_{1} = \sum_{k = 1}^{K} \Vert \widetilde{\mathbf{F}}_{k}^{*} - \mathbf{F}_{\mathrm{RF}} \mathbf{F}_{\mathrm{BB}k} \Vert_{\mathrm{F}}^{2}$;
				\State $\mathbf{F}_{\mathrm{BB}k} = (\mathbf{F}_{\mathrm{RF}}^{H} \mathbf{F}_{\mathrm{RF}})^{-1} \mathbf{F}_{\mathrm{RF}}^{H} \widetilde{\mathbf{F}}_{k}^{*}$, $k = 1, 2, ..., K$;
				\State $\Delta_{2} = \sum_{k = 1}^{K} \Vert \widetilde{\mathbf{F}}_{k}^{*} - \mathbf{F}_{\mathrm{RF}} \mathbf{F}_{\mathrm{BB}k} \Vert_{\mathrm{F}}^{2}$, $l = l + 1$;
			\Until{($l > L$ or $\vert \Delta_{1} - \Delta_{2}\vert < \varepsilon$)}
			
			\State $\mathbf{F}_{\mathrm{BB}k} = \frac{\sqrt{P_{k}}}{\Vert \mathbf{F}_{\mathrm{RF}} \mathbf{F}_{\mathrm{BB}k} \Vert_{\mathrm{F}}}\mathbf{F}_{\mathrm{BB}k}$, $\forall k = 1, 2, ..., K$. Let $k = 1$;
			\Repeat		\Comment{the third stage}
				\State Construct $\widetilde{\mathbf{H}}_{\mathrm{eq}k}$ according to \eqref{Eq14};
				\State $\mathbf{C}_{k} = \mathcal{N}(\widetilde{\mathbf{H}}_{\mathrm{eq}k})$, $\mathbf{Q}_{k} = \mathbf{C}_{k} (\mathbf{C}_{k}^{H} \mathbf{C}_{k})^{-1} \mathbf{C}_{k}^{H}$;
				\State $\mathbf{F}_{\mathrm{BB}k} = \mathbf{Q}_{k} \mathbf{F}_{\mathrm{BB}k}$, $\mathbf{F}_{\mathrm{BB}k} = \frac{\sqrt{P_{k}}}{\Vert \mathbf{F}_{\mathrm{RF}} \mathbf{F}_{\mathrm{BB}k} \Vert_{\mathrm{F}}}\mathbf{F}_{\mathrm{BB}k}$, $k = k + 1$;
			\Until{($k > K$)}
			\State \Return $\{\mathbf{W}_{k}\}_{k = 1}^{K}$, $\mathbf{F}_{\mathrm{RF}}$, $\{\mathbf{F}_{\mathrm{BB}k}\}_{k = 1}^{K}$.
 		\end{algorithmic}
		
	\end{algorithm}
	\vspace{-0.66cm}
\end{figure}
\ULforem
\vspace{-0.35cm}
\section{Numerical Results}
\vspace{-0.1cm}

\begin{figure}[!t]
	\centering
	\includegraphics[width=5cm, height=3.8cm]{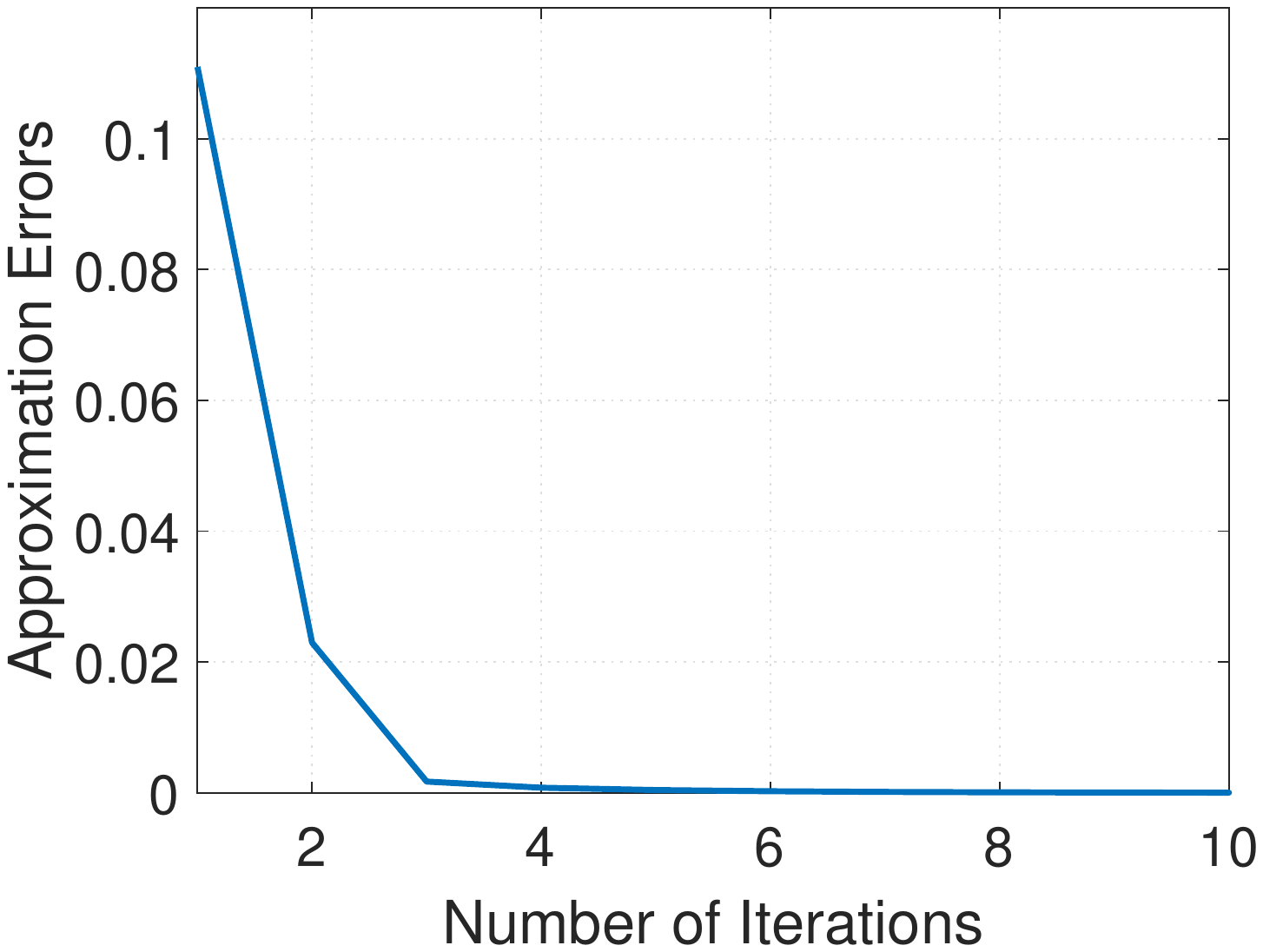}  % [width=5.26cm, height=4cm]
	\vspace{-0.35cm}
	\caption{\label{Fig2} Convergence of the second stage in the proposed design.}
	\vspace{-0.7cm}
\end{figure}

\vspace{-0.05cm}
In this section, we present numerical results to evaluate the performance of the proposed hybrid beamforming design for mmWave MU-MIMO systems with DSs.  
For comparison, the classical hybrid beamforming for DSs, i.e., \textbf{Mink-L1}\cite{Heath_SubA_HB}, \textbf{DHBF}\footnote{For fair comparisons, we only consider the fully-digital beamforming obtained by the SVD of user channels as in \eqref{Eq5} instead of the PDD method.}\cite{mmWaveWB_Dynamic_DUT}, and \textbf{LowCom}\cite{Dynamic_LowCom}, and that for FSs, i.e., \textbf{SDR-AltMin}\cite{JunZhang_MIMO_OFDM_HB}, are used in the second stage as benchmarks.
The hybrid beamforming for mmWave MU-MIMO systems with FCSs in \cite{HBF_MUMIMO_BUPT}, termed \textbf{HBF-FCS}, and the fully-digital beamforming assuming no IUI, named \textbf{FD}, are both considered.

In the simulations, the carrier frequency is assumed to be $28 \text{ GHz}$. 
The cell radius is $40 \text{ m}$, and the path loss for each user is referred to \cite{mmWaveChannelModel_NYU}.  
For user channel $\mathbf{H}_{k}$, $k = 1, 2, ..., K$, the number of clusters and rays is respectively set as $N_{c}^{k} = 6$ and $N_{ray}^{k} = 15$. The AoAs/AoDs of rays in the same cluster follow the Laplacian distribution with the mean angle uniformly distributed in $[0, 2\pi]$ and the angular spread of $10^{\degree}$. We consider the uniform linear arrays, and the antenna spacing is a half wavelength. 
Unless otherwise specified, the BS is equipped with $N_{\mathrm{T}} = 64$ antennas and $N_{\mathrm{RF}} = 16$ RF chains and each user with $N_{\mathrm{R}} = 4$ antennas. The number of users is set as $K = 6$, and that of the data streams for each user is $N_{s} = 2$. The total transmit power is $P = 30 \text{ dBm}$, and signal-to-noise ratio (SNR) is defined as $\mathrm{SNR} = \frac{P_{rx}}{\sigma^{2}}$, where $P_{rx}$ is the received power. 
We set the tolerance of accuracy $\varepsilon = 10^{-4}$ and the maximum number of iterations $L = 200$. Simulation results are averaged over 500 random channel realizations.

\begin{figure*}[!t] 
	\centering  
	\subfigcapskip=-5pt
	\subfigure[]{
		\label{Fig3a}
		\includegraphics[width=0.31 \linewidth]{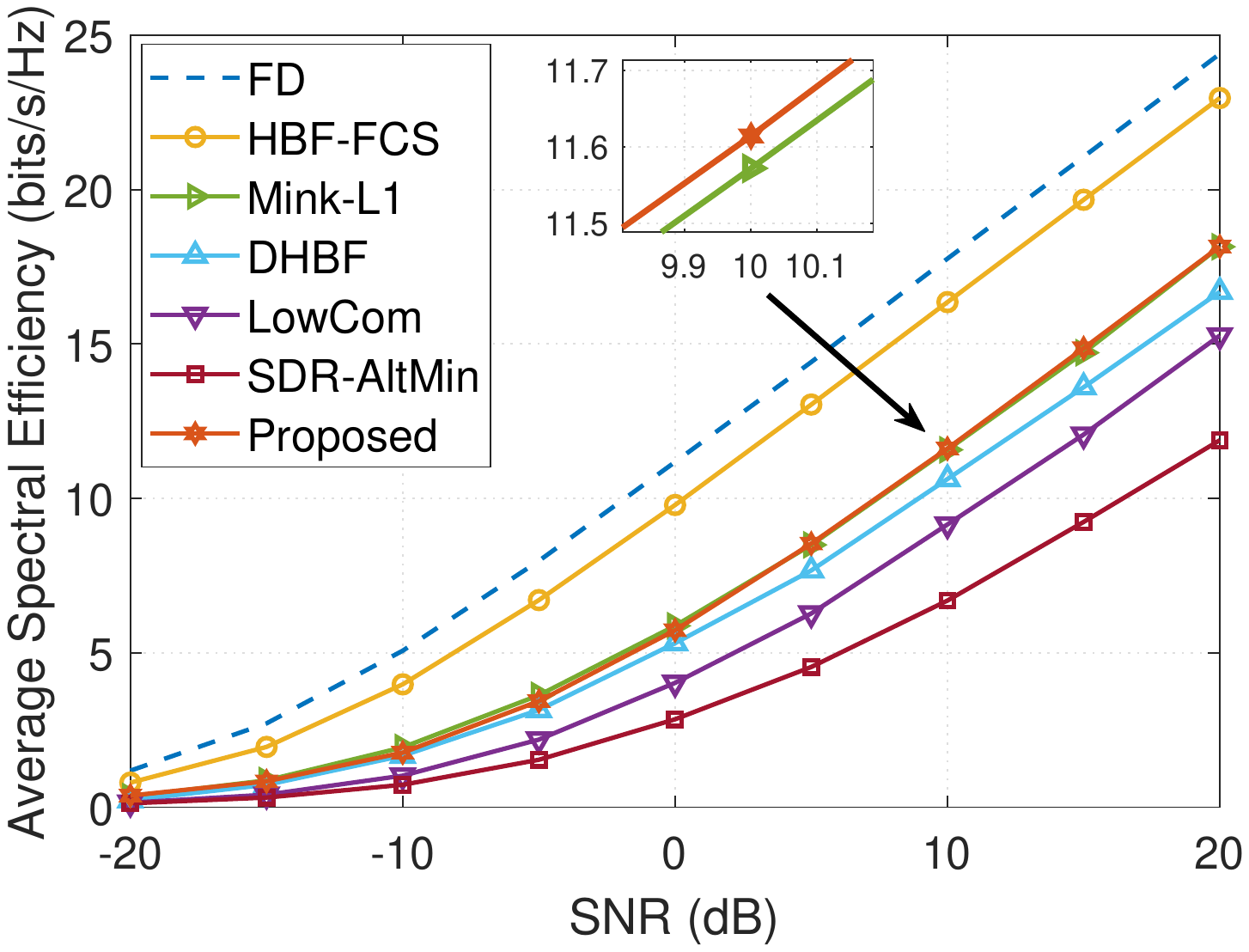}} 
	\subfigure[]{
		\label{Fig3b}
		\includegraphics[width=0.31 \linewidth]{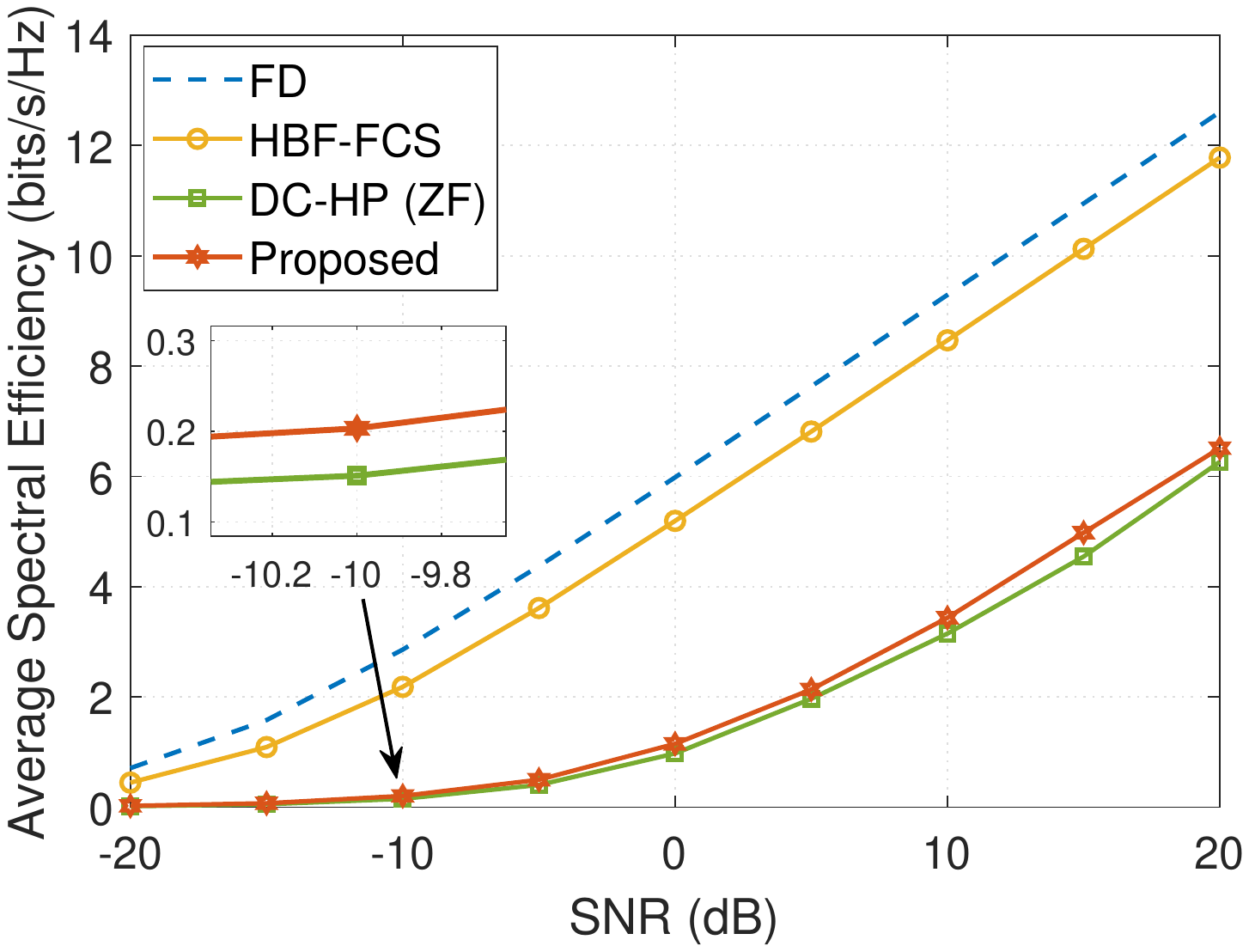}}
	\subfigure[]{
		\label{Fig3c}
		\includegraphics[width=0.31 \linewidth]{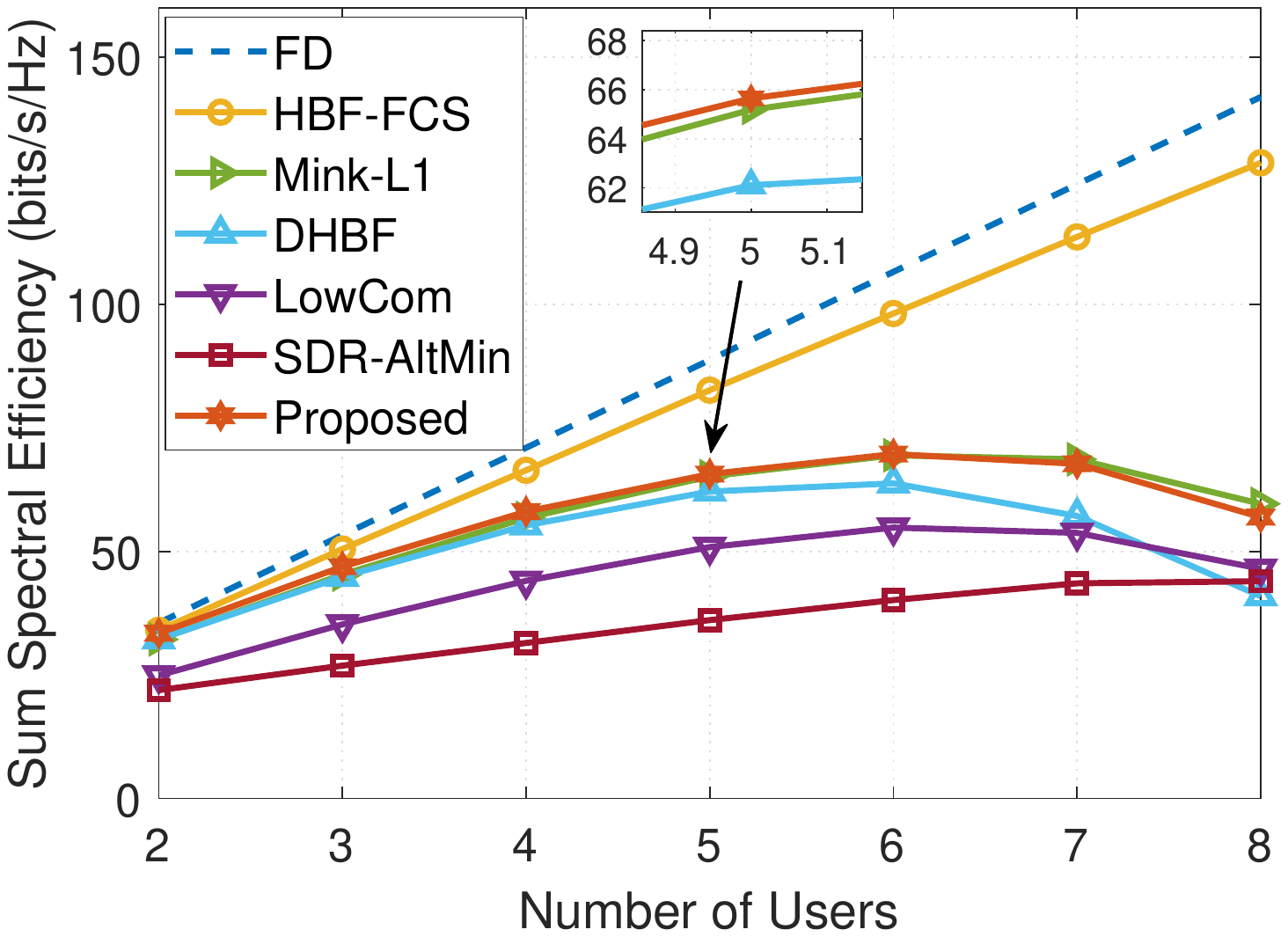}} 
	\vspace{-0.6em}
	\caption{\label{Fig3} The spectral efficiency performance of the proposed design. (a) The average spectral efficiency of each multi-antenna user versus SNR. (b) The average spectral efficiency of each single-antenna user versus SNR. (c) The sum spectral efficiency of all users versus the number of users.}
	\vspace{-1.8em}
\end{figure*}

\figurename{\ref{Fig2}} shows the convergence of the second stage in our proposed dynamic hybrid beamforming when $\mathrm{SNR} = 10 \text{ dB}$. We observe that the approximation error between the full-digital and hybrid beamformers, which is $\sum_{k = 1}^{K} \Vert \widetilde{\mathbf{F}}_{k}^{*} - \mathbf{F}_{\mathrm{RF}} \mathbf{F}_{\mathrm{BB}k} \Vert_{\mathrm{F}}^{2}$, converges monotonically and well in a few iterations.

\figurename{\ref{Fig3a}} presents the average SE of each multi-antenna user in the considered system versus SNR. We observe that the proposed design achieves the SE close to the Mink-L1 method and better than the other hybrid beamforming methods for DSs and FSs. Given that the Mink-L1 method cannot guarantee that each RF chain is connected to at least one antenna, our proposed design has its superiority. 
The SE of our proposed design is inferior to the HBF-FCS method mainly due to the better beamforming gains in the FCSs. When $\mathrm{SNR} = 20 \text{ dB}$, the proposed design can achieve approximately $79\%$ SE of the HBF-FCS method. 
Our proposed design is generalized and can also be used in the scenario with single-antenna users. In \figurename{\ref{Fig3b}}, we evaluate the average SE versus SNR in the case where $N_{\mathrm{R}} = 1, N_{\mathrm{s}} = 1$, and $K = 16$. It is seen that the proposed design outperforms the DC-HP (ZF) method in \cite{Dyanmic_1Ant}. 

\figurename{\ref{Fig3c}} illustrates the sum SE of all users versus the number of users when $\mathrm{SNR} = 10 \text{ dB}$. It is observed that with the increase in users, the sum SE achieved by the proposed design first increases and then decreases, which reaches the maximum at the number of users $K = 6$. This is because when $K$ grows beyond $6$, the IUI increases severely so that the sum SE degrades gradually. The sum SE of the proposed design is better than the DHBF, LowCom, and SDR-AltMin methods.  
However, when $K$ is large, the Mink-L1 method achieves a little better SE than our proposed design. The probable reason is that the Mink-L1 method cannot guarantee that each RF chain is connected to at least one antenna, making this method less vulnerable to the IUI than our proposed design.

\vspace{-0.4cm}
\section{Conclusion} 
\vspace{-0.15cm}
In this letter, we proposed an effective three-stage hybrid beamforming design for mmWave MU-MIMO systems with DSs, which can guarantee that each RF chain is connected to at least one antenna and effectively null out the IUI. Numerical results demonstrated that the proposed design can achieve better performance than the related methods for DSs and FSs.

\ifCLASSOPTIONcaptionsoff
  \newpage
\fi
\vspace{-0.45cm}
\normalem
\bibliographystyle{IEEEtran}
\bibliography{Reference}
\end{document}